\documentclass[11pt,letter]{article}

\usepackage{fullpage}
\usepackage[utf8x]{inputenc}
\usepackage{amsmath, amsthm}
\usepackage{wrapfig,graphicx,amssymb,textcomp,array,amsmath}
\usepackage{yhmath}

\usepackage{color}
\usepackage{changepage}
\usepackage{algpseudocode} 
\algtext*{EndWhile}% Remove "end while" text
\algtext*{EndIf}% Remove "end if" text
\algtext*{EndFor}% Remove "end for" text
\usepackage{algorithm}
 % disable algorithm numbering

\setlength{\arraycolsep}{0in}

%opening
\title{Spanning Trees in Multipartite Geometric Graphs
}

\author{Ahmad Biniaz\thanks{School of Computer Science, Carleton University. 
Supported by NSERC.\newline \indent\indent  ahmad.biniaz@gmail.com, \{jit, anil, morin, michiel\}@scs.carleton.ca}\and
Prosenjit Bose\footnotemark[1]\and David Eppstein\thanks{Computer Science Department, University of California, Irvine.
Supported by NSF grant CCF-1228639.\newline\indent\indent eppstein@ics.uci.edu}\and Anil Maheshwari\footnotemark[1]\and Pat Morin\footnotemark[1] \and Michiel Smid\footnotemark[1]}

\date{\today}
\newcommand{\etal}{{\em et al.}}

\newcommand{\canset}[3]{S(#1,#2,#3)}
\newcommand{\excanset}[4]{S(#1,#2,#3,#4)}
\newcommand{\cset}[1]{S_{#1}}
\newcommand{\ncset}[1]{\overline{S_{#1}}}

\newcommand{\vor}[1]{\mathcal{V}(#1)}

\newcommand{\dt}[1]{\mathrm{DT}(#1)}
\newcommand{\fvd}[1]{\mathcal{F}(#1)}
\newcommand{\fvc}[2]{\phi(#1,#2)}
\newcommand{\CH}[1]{\mathrm{CH}(#1)}
\newcommand{\ovl}[1]{\overline{#1}}
\newcommand{\BCP}[2]{\mathrm{bcp}(#1,#2)}
\newcommand{\BFP}[2]{\mathrm{bfp}(#1,#2)}

\newtheorem{lemma}{Lemma}

\newtheorem{theorem}{Theorem}
\newtheorem{observation}{Observation}
\newtheorem*{problem*}{Problem}

\begin{document}

\maketitle

\begin{abstract}
Let $R$ and $B$ be two disjoint sets of points in the plane where the points of $R$ are colored red and the points of $B$ are colored blue, and let $n=|R\cup B|$. A bichromatic spanning tree is a spanning tree in the complete bipartite geometric graph with bipartition $(R,B)$. The minimum (respectively maximum) bichromatic spanning tree problem is the problem of computing a bichromatic spanning tree of minimum (respectively maximum) total edge length. 
\begin{enumerate}
  \item We present a simple algorithm that solves the minimum bichromatic spanning tree problem in $O(n\log^3 n)$ time. This algorithm can easily be extended to solve the maximum bichromatic spanning tree problem within the same time bound. It also can easily be generalized to multicolored point sets.
  \item We present $\Theta(n\log n)$-time algorithms that solve the minimum and the maximum bichromatic spanning tree problems. 
  \item We extend the bichromatic spanning tree algorithms and solve the multicolored version of these problems in $O(n\log n\log k)$ time, where $k$ is the number of different colors (or the size of the multipartition in a complete multipartite geometric graph).
\end{enumerate}
\end{abstract}

%\keywords{multipartite geometric graphs, minimum spanning tree, maximum spanning tree.}

\section{Introduction}
\label{introduction-section}
Let $R$ and $B$ be two disjoint sets of points in the plane, and suppose that the points of $R$ are colored red and the points of $B$ are colored blue. A {\em bichromatic spanning tree} on $R\cup B$ is a spanning tree in the {\em complete bipartite geometric graph} $K(R,B)$ with bipartition $(R,B)$. In other words, a bichromatic spanning tree is a spanning tree in which every edge has a red endpoint and a blue endpoint. The {\em minimum bichromatic spanning tree} (MinBST) problem is the problem of computing a spanning tree in $K(R,B)$ whose total edge length is minimum. Similarly, the {\em maximum bichromatic spanning tree} (MaxBST) problem is the problem of computing a spanning tree in $K(R,B)$ whose total edge length is maximum. 
A natural extension of the MinBST and MaxBST problems is to have more than two colors. In this multicolored version, the input points are colored by $k$ colors, and we are looking for a minimum/maximum spanning tree in which the two endpoints of every edge have distinct colors. In other words, we look for a minimum/maximum spanning tree in a complete $k$-partite geometric graph. We call these problems Min-$k$-ST and Max-$k$-ST, respectively. See Figure~\ref{spanning-trees-fig}.

\begin{figure}[htb]
  \centering
\setlength{\tabcolsep}{0in}
  $\begin{tabular}{cccc}
 \multicolumn{1}{m{.25\columnwidth}}{\centering\includegraphics[width=.18\columnwidth]{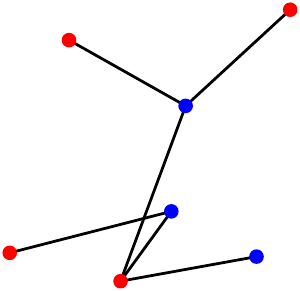}}
&\multicolumn{1}{m{.25\columnwidth}}{\centering\includegraphics[width=.18\columnwidth]{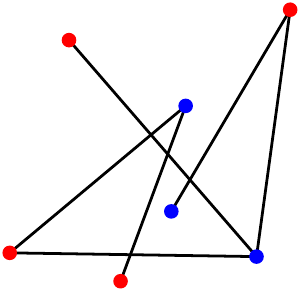}}
& \multicolumn{1}{m{.25\columnwidth}}{\centering\includegraphics[width=.18\columnwidth]{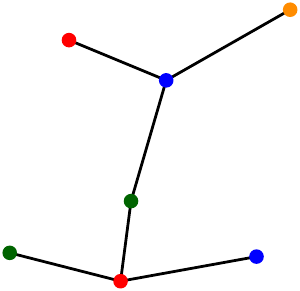}}
&\multicolumn{1}{m{.25\columnwidth}}{\centering\includegraphics[width=.18\columnwidth]{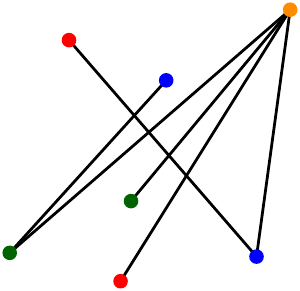}}
\\
MinBST & MaxBST & Min-4-ST & Max-4-ST
\end{tabular}$
  \caption{Colored spanning trees.}
\label{spanning-trees-fig}
\end{figure}
The MinBST and MaxBST problems are natural extensions of the well-known Euclidean minimum and maximum spanning tree problems, which we refer to as the MinST and MaxST problem, respectively, and are defined as follows. Given a set $P$ of $n$ points in the plane, in the MinST (resp. MaxST) problem we seek a spanning tree of minimum (resp. maximum) edge length in the complete geometric graph $K(P)$. 
It is well known that any MinST of $K(P)$ is a subgraph of the Delaunay triangulation of $P$. Thus, one can compute, in $O(n\log n)$ time, a MinST by first computing the Delaunay triangulation and then running Kruskal's or Prim's algorithm on it. Monma and Paterson~\cite{Monma1990} showed that a MaxST of $K(P)$ can also be computed in $O(n\log n)$ time. They also proved a matching lower bound for computing a MaxST
in the algebraic computation-tree model.

For the MinBST and the MaxBST problems, let $n=|R\cup B|$. Then
$K(R,B)$ has $O(n^2)$ edges.
By running Prim's minimum spanning tree algorithm on $K(R,B)$ one can solve the MinBST problem in $O(n^2)$ time. By combining Prim's algorithm with the currently best known dynamic data structure for the bichromatic closest pair problem (see Kaplan~\cite{Kaplan2016}) this problem can be solved in $O(n \log^7 n)$ time. The MaxBST problem can be solved in $O(n^2)$ time by running Prim's algorithm on $K(R, B)$ with negated edge lengths. Because of their geometric nature, we expect faster algorithms for these problems. In this paper, we will study the MinBST, MaxBST, Min-$k$-ST, and Max-$k$-ST problems.
\subsection{Our contribution}
The algorithms presented in this paper are base on Bor\r{u}vka's minimum spanning tree algorithm. In Section~\ref{bi-mst-section} we present a simple algorithm that solves the MinBST problem in $O(n\log^3 n)$ time. We extend this algorithm to solve the MaxBST problem within the same time bound. Also, we extend this algorithm to solve the Min-$k$-ST and Max-$k$-ST problems in $O(n\log^3 n\log k)$ time. By making more use of geometry, in Section~\ref{n-1-section} we present an algorithm that solves the MinBST problem in $O(n\log n)$ time. In Section~\ref{maxbst-section} we show how to adopt the MinBST algorithm and solve the MaxBST problem also in $O(n\log n)$ time. In Section~\ref{k-color-section} we use the MinBST and MaxBST algorithms and solve the Min-$k$-ST and Max-$k$-ST problems in $O(n\log n\log k)$ time.

\section{Bor\r{u}vka's algorithm and binary numbering}
\label{bi-mst-section}
In this section we present a simple algorithm that solves the MinBST and MaxBST problems in $O(n\log^3 n)$ time, where $n$ is the total number of input points. Moreover, we show that this algorithm can be extended to solve the Min-$k$-ST and Max-$k$-ST problems in $O(n\log^3 n\log k)$ time, where $k$ is the number of colors. 
First, we present the algorithm for the MinBST problem, and then we will describe how to extend this algorithm to solve the other problems. Our algorithm is based on Bor\r{u}vka's minimum spanning tree algorithm. Recall that we are given two sets $R$ and $B$ of red and blue points in the plane, respectively, and we want to compute a minimum bichromatic spanning tree (MinBST) in $K(R,B)$. Let $n=|R\cup B|$. 

Bor\r{u}vka's algorithm maintains a forest (initially with each vertex in its own one-node tree) and at each stage of the algorithm adds a set of edges, where each edge is the shortest edge connecting a tree to a vertex outside of it. The number of trees goes down by a factor of two or more in each stage, so there are $O(\log n)$ stages. To use this algorithm for bichromatic point sets, we need to find for every tree $T$ in the current stage, the shortest bichromatic edge that connects a point of $T$ to an oppositely-colored point outside of $T$. We briefly describe how to find, for one stage of Bor\r{u}vka's algorithm, all these shortest bichromatic edges in $O(n\log^2 n)$ time. This implies that Bor\r{u}vka's algorithm solve the MinBST problem in $O(n\log^3 n)$ time. 

To find all shortest bichromatic edges, in the current stage, it is sufficient to solve the following problem: given a partition of the points into subsets (the vertices of the current trees), find for each point $p$ the nearest oppositely-colored point that is not in $p$'s subset. We call this the \emph{all nearest unrelated points} problem. 
We show how to solve the all nearest unrelated points problem in $O(n\log^2 n)$ time. Number the subsets of the partition arbitrarily, as integers from $0$ to one less than the number of subsets, and represent each of these numbers in binary as a sequence of $O(\log n)$ bits. Define a \emph{canonical set} $\canset{i}{b}{c}$ to be the subset of points that belong to a set whose label's $i$th bit is $b$ (for $b\in\{0,1\}$) and whose color is $c$ (for $c\in\{\text{red},\text{blue}\}$). Then for a point $p$, the set of points that are unrelated to $p$ can be expressed as the union of logarithmically many canonical sets $\canset{i}{b}{c}$, one for each possible value of $i$, where $b$ is the complement of the $i$th bit of the label for $p$'s subset, and $c$ is the opposite color to $p$. Therefore, we can solve the all nearest unrelated points problem by computing a Voronoi diagram for each canonical set, and building a point location data structure for each Voronoi diagram. Then, for each point $p$ we query the canonical sets whose union is the set of unrelated points to $p$, and combine the results of the queries to find $p$'s nearest unrelated point. There are $O(\log n)$ Voronoi diagrams, each of which can be built in time $O(n\log n)$, after which we spend $O(\log^2 n)$ time per point to query these diagrams. So the total time for computing all nearest unrelated points is $O(n \log^2 n)$. (It is possible to build the Voronoi diagrams more quickly by computing a single Voronoi diagram of all of the points and using it to guide the construction of the Voronoi diagrams of the subsets---see~\cite{Chazelle2002, Loffler2012}---but this would not speed up the overall algorithm because of the point location time.)

Since Bor\r{u}vka's algorithm takes logarithmically many stages, and each stage can be performed using a single computation of all nearest unrelated points, the total time to construct a bichromatic minimum spanning tree is $O(n \log^3 n)$.

\subsection{Extension to the MaxBST problem}
The combination of Bor\r{u}vka's algorithm and the binary numbering method of the previous section can also be used to solve the MaxBST problem. For this problem, at each stage of Bor\r{u}vka's algorithm we add the longest edges connecting each tree to a vertex outside of it. Therefore, we compute the farthest-point Voronoi diagram for each canonical set, and locate each point $p$ in these diagrams. Since the farthest-point Voronoi diagram of $n$ points can be constructed in $O(n\log n)$ time, the above algorithm solves the MaxBST problem in $O(n \log^3 n)$ time.

\subsection{Extension to the Min-$k$-ST and Max-$k$-ST problems}
\label{extension-section}
In this section we extend the algorithm of the previous section to solve the Min-$k$-ST and Max-$k$-ST problems. We describe the algorithm for the Min-$k$-ST problem; the algorithm for the Max-$k$-ST problem is analogous. 

Recall that in the Min-$k$-ST problem, the input points are colored by $k \geqslant 2$ colors, and we want to compute a minimum spanning tree in which the two endpoints of every edge have distinct colors.  
To use Bor\r{u}vka's algorithm for this problem, we need to find for each point a nearest unrelated point, i.e., a nearest point of different color that is not in its own tree/component. 
Number the trees, and represent each of these numbers in binary as a sequence of $O(\log n)$ bits. Number the $k$ colors by $1,2,\dots, k$, and represent each of these numbers in binary as a sequence of $O(\log k)$ bits. Define a canonical set $\excanset{i}{b}{j}{b'}$ to be the subset of points that (i) their color's $j$th bit is $b'$, and (ii) belong to a tree whose label's $i$th bit is $b$; notice that $b,b'\in\{0,1\}$. Then, for a point $p$, the set of points that are unrelated to $p$ can be expressed as the union of canonical sets $\excanset{i}{b}{j}{b'}$, one for each possible pair $(i,j)$, where $b$ is the complement of the $i$th bit of the label for $p$'s tree, and $b'$ is the complement of the $j$th bit of $p$'s color. To find a nearest unrelated point to $p$, we locate $p$ in the Voronoi diagrams of the canonical sets whose union is the set of unrelated points to $p$. There are $O(\log n\log k)$ canonical sets. It takes $O(n\log^2 n\log k)$ time to build the Voronoi diagrams for all these sets, and also to locate all the points in these diagrams. Thus, the total time to solve the Min-$k$-ST problem (and also the Max-$k$-ST problem) is $O(n \log^3 n\log k)$.

\section{The minimum bichromatic spanning tree problem}
\label{n-1-section}
Recall that we are given two sets $R$ and $B$ of red and blue points in the plane, respectively, and we want to compute a minimum spanning tree in in $K(R,B)$. In this section, we present an algorithm that computes a MinBST in $K(R,B)$ in $O(n\log n)$ time, where $n=|R\cup B|$. In fact, we show how to find, for all stages of Bor\r{u}vka's algorithm, all shortest bichromatic edges in $O(n\log n)$ time. Our algorithm is optimal because finding the bichromatic closest pair has an $\Omega(n\log n)$ lower bound (see \cite{Avis1980}). Our algorithm is summarized below. 

\begin{algorithm}[H]                 % enter the algorithm environment
\caption{MinBST$(R,B)$}          % give the algorithm a caption
\label{alg1} 
\begin{algorithmic}[1]
    \State Connect each point to a closest point of opposite color.
    \State Run Bor\r{u}vka's algorithm on the resulting components of step 1.
\end{algorithmic}
\end{algorithm}
\begin{observation}
\label{closest-point-obs}
In any minimum bichromatic spanning tree, every point is connected to a closest point of opposite color.  
\end{observation}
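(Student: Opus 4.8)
The plan is to prove the stated \emph{universal} claim by a local edge-exchange argument: I will fix an arbitrary minimum bichromatic spanning tree $T$ and an arbitrary point $p$, assume for contradiction that $p$ is joined in $T$ to \emph{no} closest opposite-colored point, and from this produce a strictly shorter bichromatic spanning tree. Since $T$ and $p$ are arbitrary, this forces the property to hold for every point in every MinBST, which is exactly what the observation asserts (and is what distinguishes the claim from the weaker statement that some optimal tree has the property).

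First I would set up the exchange. Let $q$ be a point of opposite color to $p$ minimizing the distance to $p$, and write $d=|pq|$. Every edge of $T$ incident to $p$ joins $p$ to a point of opposite color, so each neighbor of $p$ lies at distance at least $d$; the assumption that $p$ is adjacent to no closest opposite-colored point upgrades this to the \emph{strict} inequality that every neighbor of $p$ in $T$ is farther than $d$. Under this assumption $pq$ is not an edge of $T$, so adding $pq$ creates a unique cycle $C$, namely the $p$-to-$q$ path in $T$ together with $pq$. The first edge of that path, call it $pr$, is a tree edge incident to $p$ and lying on $C$, and hence satisfies $|pr|>d$.

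Next I would perform and validate the swap. Define $T'=T-pr+pq$. Removing $pr$, an edge of the cycle $C$, keeps the graph connected, and $pq$ is bichromatic, so $T'$ is again a bichromatic spanning tree. Its total length is $\length{T'}=\length{T}-|pr|+|pq|=\length{T}-|pr|+d<\length{T}$, contradicting the minimality of $T$. This contradiction shows that $p$ must be adjacent to a closest opposite-colored point in $T$, and since $T$ and $p$ were chosen arbitrarily the observation follows.

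The delicate part, and the one that separates this universal statement from the existential version, is the strict inequality $|pr|>d$. It is precisely the hypothesis that $p$ is joined to no closest opposite-colored point that rules out neighbors at distance exactly $d$ and thereby guarantees a \emph{strict} decrease in length after the exchange; without strictness one obtains only a swap that does not increase the length, which would establish existence of \emph{some} optimal tree with the property rather than the property in \emph{every} optimal tree. I would therefore state the tie-breaking carefully and confirm the two structural facts that make the exchange work: that the removed edge $pr$ is genuinely incident to $p$ and that it lies on the newly created cycle, so that the swap is both length-decreasing and spanning-tree-preserving.
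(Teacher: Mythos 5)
Your exchange argument is correct, and it is exactly the standard justification for this statement: the paper itself states Observation~\ref{closest-point-obs} without proof, implicitly relying on the argument you spell out. In particular, you correctly identify and handle the one delicate point---that the hypothesis rules out neighbors of $p$ at distance exactly $d$, so the swap $T - pr + pq$ is \emph{strictly} length-decreasing, which is what makes the claim hold for \emph{every} minimum bichromatic spanning tree rather than merely some optimal tree.
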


The correctness of algorithm MinBST follows from Observation~\ref{closest-point-obs} and from the correctness of Bor\r{u}vka's algorithm. Before analyzing the running time of this algorithm we introduce some notation.
For a point set $Q$ in the plane, let $\dt{Q}$ denote the Delaunay triangulation of $Q$, and $\vor{Q}$ denote the Voronoi diagram of $Q$. Let $\nu(q,Q)$ denote the Voronoi region/cell of a point $q\in Q$ in $\vor{Q}$. For two disjoint point sets $Q_1$ and $Q_2$, where each of the points in $Q_1\cup Q_2$ is colored either red or blue, we define the bichromatic closest pair $\BCP{Q_1}{Q_2}$ as a closest red-blue pair between $Q_1$ and $Q_2$.

Step 1 in algorithm MinBST takes $O(n \log n)$ time; a straightforward solution is to locate all points of $R$ in $\vor{B}$ and all points of $B$ in $\vor{R}$. In the rest of this section we show how to run Bor\r{u}vka's algorithm (step 2) in $O(n\log n)$ time. Notice that, in algorithm MinBST, one can combine steps 1 and 2 and just run Bor\r{u}vka's algorithm on the input point set. However, having step 1 separately, will simplify the running time analysis.   

Recall that Bor\r{u}vka's algorithm maintains a forest (a set of trees which we call components) and at each stage adds the shortest edges connecting each component to a vertex outside of it. Thus, in each stage the number of components goes down by a factor of two or more. The output of the last stage is a single component which is a minimum spanning tree. Therefore, there are $O(\log n)$ stages in total. 
Consider one stage of Bor\r{u}vka's algorithm. Let $C_1,\dots, C_k$ be the input components to that stage. Let $P=R\cup B$, and for each $i\in\{1,\dots,k\}$, let $P_i$ denote the set of points in $C_i$. Note that, after step 1, we have $k\leqslant n/2$. Moreover, each point is in the same component as its closest point of opposite color. We have to find for each component $C_i$, the shortest edge connecting a point in $C_i$ to an oppositely-colored point outside of $C_i$.
In fact, we have to solve the following problem which we call ``all bichromatic closest pairs'': 

\begin{problem*}
Given a set $P$ of red and blue points in the plane and a partition of $P$ into $\{P_1,\dots,P_k\}$ such that each point of $P$ is in the same set $P_i$ as its closest point of opposite color. Find for each $i\in\{1,\dots, k\}$, the closest red-blue pair between $P_i$ and $P\setminus P_i$, i.e., $\BCP{P_i}{P\setminus P_i}$. 
\end{problem*}

Let $R_i$ and $B_i$ denote the set of red and blue points of $P_i$, respectively. Then, $\BCP{P_i}{P\setminus P_i}$ can be computed by taking the shorter of $\BCP{B_i}{R\setminus R_i}$ and $\BCP{R_i}{B\setminus B_i}$.
The following algorithm finds $\BCP{B_i}{R\setminus R_i}$ for all $i\in\{1,\dots,k\}$. By swapping the role of red and blue points, one can compute $\BCP{R_i}{B\setminus B_i}$ for all $i$.

\begin{algorithm}[H]                 % enter the algorithm environment
\caption{All-Blue-BCP$(\{R_1\cup B_1, \dots,R_k\cup B_k\})$}          % give the algorithm a caption
\label{alg2} 
\begin{algorithmic}[1]
    \State Construct $\dt{R}$.
    \For {$i= 1$ to $k$}
	\State compute $T_i=\{p\in R\setminus R_i\colon$ in $\dt{R}$, $p$ is adjacent to a point in $R_i\}$;
	\State construct $\dt{B_i\cup T_i}$;
	\State $\BCP{B_i}{R\setminus R_i}=$ the endpoints of a shortest red-blue edge in $\dt{B_i\cup T_i}$. 
    \EndFor
\end{algorithmic}
\end{algorithm}

We prove the correctness of algorithm All-Blue-BCP by a non-trivial extension of the proof of Theorem 3.1 in~\cite{AAggarwal1992} to our bichromatic setting.
The set $T_i$, that is computed in line 3 in algorithm All-Blue-BCP, contains the points of $R\setminus R_i$ that are adjacent to some point of $R_i$ in $\dt{R}$; see Figure~\ref{Voronoi-cell-fig}(a). To simplify the notation, we write $\ovl{R_i}$ for $R\setminus R_i$. In line 5, the algorithm computes $\BCP{B_i}{\ovl{R_i}}$ as the endpoints of a shortest red-blue edge in $\dt{B_i\cup T_i}$. Thus, we have to prove that $\BCP{B_i}{\ovl{R_i}}=\BCP{B_i}{T_i}$. Take any $i\in\{1,\dots,k\}$. Let $(b,p)=\BCP{B_i}{\ovl{R_i}}$ where $b\in B_i$ and $p\in \ovl{R_i}$. To prove the correctness of this algorithm, we just need to show that $p$ is in $T_i$. In order to show this, we prove that for some point $q\in R_i$, $pq$ is an edge of $\dt{R}$; this guarantees that algorithm All-Blue-BCP adds $p$ to $T_i$ in line 3. 
\begin{figure}[htb]
  \centering
\setlength{\tabcolsep}{0in}
  $\begin{tabular}{cc}
 \multicolumn{1}{m{.55\columnwidth}}{\centering\includegraphics[width=.45\columnwidth]{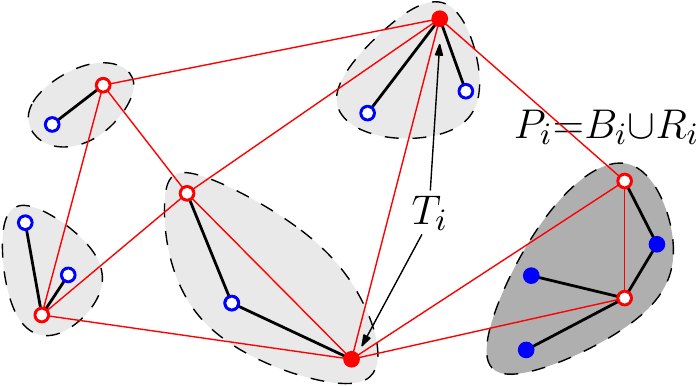}}
&\multicolumn{1}{m{.45\columnwidth}}{\centering\includegraphics[width=.4\columnwidth]{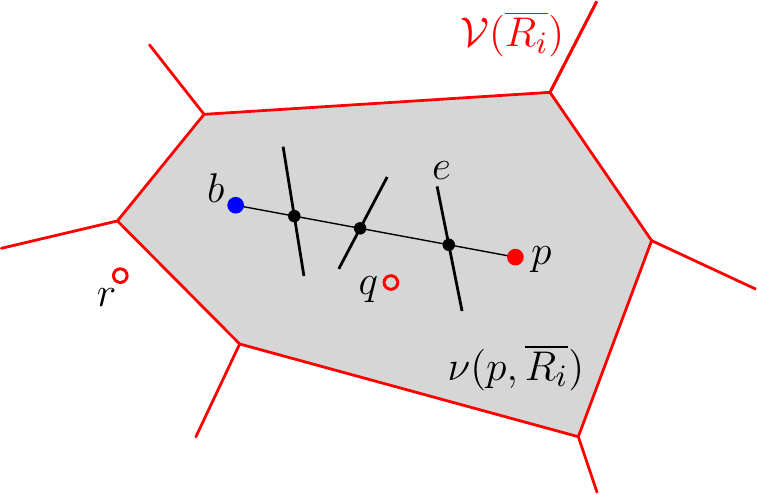}}
\\
(a) & (b) 
\end{tabular}$
  \caption{Proof of the correctness of algorithm All-Blue-BCP: (a) The input components to one stage of Bor\r{u}vka's algorithm. Solid blue points belong to $B_i$ and solid red points belong to $T_i$. (b) $p$ is a point in $\ovl{R_i}$ that is closest to $b$, and $r$ is a point in $R$ that is closest to $b$.}
\label{Voronoi-cell-fig}
\end{figure}

Consider the Voronoi diagram $\vor{\ovl{R_i}}$. Because $p$ is a point of $\ovl{R_i}$ that is closest to $b$, $b$ lies in the Voronoi cell $\nu(p, \ovl{R_i})$. Since $\nu(p, \ovl{R_i})$ is convex, the line segment $bp$ is inside this cell; see Figure~\ref{Voronoi-cell-fig}(b). Thus, no edge of $\vor{\ovl{R_i}}$ crosses $bp$. 
Recall that $b$'s closest red point in $R$, say $r$, is in the same component as $b$. Thus, $r\in R_i$ and also $r\neq p$. Now, imagine the construction of $\vor{R}$ by inserting the points of $R_i$ to $\vor{\ovl{R_i}}$. Since $r$ is the closest red point to $b$, $b$ lies in $\nu(r, R)$. Because of this, and since $\nu(r, R)$ and $\nu(p,R)$ are two different cells in $\vor{R}$, there are some edges of $\vor{R}$ that cross $bp$; see Figure~\ref{Voronoi-cell-fig}(b). Among those edges, consider the edge $e$ whose intersection with $bp$ is closest to $p$. Let $q\in R$ be the point such that $e$ is a common edge between $\nu(q, R)$ and $\nu(p,R)$; notice that $q\neq p$. By inserting the points of $R_i$ to $\vor{\ovl{R_i}}$, the Voronoi cells of $\vor{\ovl{R_i}}$ do not get larger. This implies that the point $q$\textemdash which has $e$ on its boundary in $\nu(q,R)$\textemdash belongs to $R_i$. Therefore, in $\dt{R}$, there is an edge between $p$ and $q$; that edge corresponds to $e$ in $\vor{R}$. Notice however that $q$ is not necessarily $r$ itself. Moreover, if $b$ is on the boundary of $\nu(p, R)$, then $e$ passes through $p$, and $q=r$. This finishes the proof for the correctness of algorithm All-Blue-BCP.

\begin{lemma}
\label{size-T-lemma}
 For any $k$, with $2\leqslant k\leqslant |R|$, and any partition $\{R_1, \dots,R_k\}$ of $R$, the total size of the sets $T_1,\dots, T_k$, that are computed in line 3 of algorithm All-Blue-BCP, is $O(|R|)$. Moreover, having $\dt{R}$, the sets $T_1\dots, T_k$ can be computed in $O(|R|)$ time.
\end{lemma}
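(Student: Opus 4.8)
The plan is to prove the cardinality bound by a charging argument over the edges of $\dt{R}$, and then to describe a single linear-time sweep that builds all the sets while suppressing duplicate insertions.

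First, for the size bound, I would rewrite the total as a sum over points rather than over parts: $\sum_{i=1}^k |T_i| = \sum_{p\in R}\bigl|\{i : p\in T_i\}\bigr|$. Fix a point $p$ and let $R_c$ be the part containing it. By the definition in line 3, $p\in T_i$ forces $i\neq c$ together with a Delaunay edge from $p$ to some point of $R_i$. Since the parts $R_1,\dots,R_k$ are pairwise disjoint, any two distinct indices $i,j$ with $p\in T_i$ and $p\in T_j$ are witnessed by Delaunay neighbors of $p$ lying in the disjoint parts $R_i$ and $R_j$, hence by \emph{distinct} neighbors. Thus the set $\{i : p\in T_i\}$ injects into the Delaunay neighbors of $p$, giving $\bigl|\{i:p\in T_i\}\bigr|\leqslant \deg_{\dt{R}}(p)$. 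Summing over $p$ yields $\sum_i |T_i| \leqslant \sum_{p\in R}\deg_{\dt{R}}(p) = 2\,|E(\dt{R})|$. As $\dt{R}$ is a planar graph on $|R|$ vertices it has $O(|R|)$ edges, so the total size is $O(|R|)$.

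Next, for the running time, I would avoid recomputing each $T_i$ from scratch and instead process the parts one at a time in a single pass. Keeping each point's part index in an array, for part $R_i$ I walk over every $q\in R_i$ and over each Delaunay neighbor $p$ of $q$, and whenever $p$ lies in a part other than $R_i$ I append $p$ to $T_i$. Because each point belongs to exactly one part, every adjacency list of $\dt{R}$ is traversed exactly once, so the work is $\sum_{q\in R}\deg_{\dt{R}}(q)=O(|R|)$ by the same planarity bound, with each neighbor relation examined $O(1)$ times.

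The one genuine obstacle is keeping each $T_i$ free of repetitions without paying more than linear time: a point $p$ can be a Delaunay neighbor of several points of the same part $R_i$, so naive appending would insert $p$ many times and could inflate a single $|T_i|$ to $\Theta(|R|)$, breaking the distinct-element count. I would resolve this with a stamp array $\mathrm{mark}[\cdot]$ indexed by the points: while building $T_i$, before appending $p$ I test whether $\mathrm{mark}[p]=i$; if so, $p$ was already inserted during the processing of this part and I skip it, otherwise I set $\mathrm{mark}[p]=i$ and append. Using the part index itself as the stamp means no reset is needed between parts, each test and insertion is $O(1)$, and a point may still legitimately appear in several different $T_i$. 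This guarantees that every $T_i$ is a set with no repetitions, so its distinct-element count matches the charging bound and the entire construction runs in $O(|R|)$ time.
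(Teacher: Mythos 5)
Your proof is correct and takes essentially the same approach as the paper: the size bound is obtained by charging each membership $p\in T_i$ to a distinct Delaunay neighbor of $p$ and invoking planarity of $\dt{R}$ (the paper phrases this with an indicator function and a double sum, you with an injection), and the construction is a single linear traversal of the edges of $\dt{R}$. Your stamp-array deduplication is a small extra refinement the paper omits (its edge-by-edge construction can insert a point into the same $T_i$ several times), but this does not change anything substantive, since even with duplicates the total number of insertions is bounded by twice the number of Delaunay edges, hence $O(|R|)$.
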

\begin{proof}
For each $p\in R$ and each $i\in\{1,\dots,k\}$ we define
\[ f(p,i) =
  \begin{cases}
    1       & \quad \text{if } p\in T_i\\
    0       & \quad \text{if } p\notin T_i.\\
  \end{cases}
\]

The number of sets $T_i$ that a red point $p$ belongs to is at most $p$'s degree $\deg(p)$ in $\dt{R}$. Thus, 
\[\sum_{i=1}^{k}|T_i|=\sum_{i=1}^{k}\sum_{p\in R}f(p,i)=\sum_{p\in R}\sum_{i=1}^{k}f(p,i)\leqslant \sum_{p\in R}\deg(p)\leqslant 2(3|R|-6),
\]
where the last inequality is valid because $\dt{R}$ is a planar graph and has at most $3|R|-6$ edges. Therefore, the total size of the sets $T_1,\dots,T_k$ is $O(|R|)$. Moreover, these sets can be computed in $O(|R|)$ time as follows. Take any edge $pq$ of $\dt{R}$. If $p$ and $q$ belong to a same set $R_i$, then do nothing. If $p$ and $q$ belong to two different sets of the partition, say $p\in R_i$ and $q\in R_j$, then add $p$ to $T_j$ and add $q$ to $T_i$. 
\end{proof}

Now we analyze the running time of algorithm All-Blue-BCP. It takes $O(|R|\log |R|)$ time to compute $\dt{R}$. Since the total size of the sets $B_1,\dots,B_k$ is $|B|$, and, by Lemma~\ref{size-T-lemma}, the total size of the sets $T_1,\dots,T_k$ is $O(|R|)$, we can compute $\dt{B_i\cup T_i}$ for all $i\in\{1,\dots,k\}$ in $O(n\log n)$ time; recall that $n=|R\cup B|$. Thus the total running time of algorithm All-Blue-BCP is $O(n\log n)$, and hence, the running time of algorithm MinBST is $O(n\log^2 n)$. In the rest of this section, we show how to improve the running time of MinBST to $O(n\log n)$. 

Kirkpatrick~\cite{Kirkpatrick1979} shows how to {\em merge} two Delaunay triangulations in linear time; that is, given $\dt{P}$ and $\dt{Q}$, how to find $\dt{P\cup Q}$ in time $O(|P|+|Q|)$. L{\"{o}}ffler and Mulzer~\cite{Loffler2012} show that the reverse operation can also be done in linear time. That is, one can {\em split} $\dt{P\cup Q}$ to obtain $\dt{P}$ and $\dt{Q}$ in time $O(|P|+|Q|)$. We use these two results, and show how to run algorithm All-Blue-BCP during all stages of Bor\r{u}vka's algorithm in total $O(n\log n)$ time. In order to show this, we use the following fact: if we have $\dt{R}$, $\dt{B_i}$ and $\dt{T_i}$ for all $i\in\{1,\dots,k\}$, then one execution of algorithm All-Blue-BCP (during one stage of Bor\r{u}vka's algorithm) takes $O(n)$ time; this is because the total size of the sets $B_1,\dots,B_k$ is $|B|$, and by Lemma~\ref{size-T-lemma} the sets $T_1,\dots,T_k$ can be computed in $O(|R|)$ time and their total size is $O(|R|)$, and thus, by the result of \cite{Kirkpatrick1979} we can compute $\dt{B_i\cup T_i}$ for all $i\in\{1,\dots,k\}$ and also all bichromatic closest pairs in $O(|R|+|B|)$ time. 

We compute $\dt{R}$ at the beginning of algorithm MinBST. As discussed earlier we can compute $\dt{B_i}$ and $\dt{T_i}$ for all input components of the first stage of Bor\r{u}vka's algorithm in $O(n\log n)$ time. Based on the discussion above, we are going to show how to retrieve all $\dt{B_i}$ and $\dt{T_i}$ for input components of the next stage of Bor\r{u}vka's algorithm from all $\dt{B_i}$ and $\dt{T_i}$ of input components of the current stage of Bor\r{u}vka's algorithm. Although for simplicity we use index $i$ to refer to the components of both the current and the next stages, notice that the number of components and their sizes vary from one stage to another. 

Consider one stage of Bor\r{u}vka's algorithm, and let $k$ be the number of input components to this stage, where $2\leqslant k\leqslant n/2$. Without loss of generality let $\{C_1,\dots, C_t\}$, with $2\leqslant t\leqslant k$, denote a subset of the input components that should be connected together and be passed to the next stage as a single component. Let $C^*$ denote the resulting component. Let $R^*$ and $B^*$ denote the sets of red and blue points of $C^*$, respectively. Let $T^*$ be the point set that will be computed (with respect to $R^*$) in line 3 of All-Blue-BCP in the next stage. Notice that $B^*$ is the union of the sets $B_1, \dots, B_t$, and $R^*$ is the union of the sets $R_1, \dots, R_t$. Recall that $T^*$ contains the points of $R\setminus R^*$ that have a Delaunay neighbor in $R^*$.
Thus, $T^*$ is the union of the sets $T_1, \dots, T_t$ minus the set $R^*$.

In order to compute $\dt{B^*}$ for the next stage, we iteratively merge the Delaunay triangulations of the two smallest sets among $B_1, \dots, B_t$. A monotone priority queue (see~\cite{Cherkassky1999}) can be used to find the two smallest sets iteratively; the total time for the queue operations is proportional to the sum of the number of sets and the size of the largest set.
We compute $\dt{T^*}$ during the same merge process that we compute $\dt{B^*}$, i.e., whenever we merge $\dt{B_i}$ and $\dt{B_j}$, we also merge $\dt{T_i}$ and $\dt{T_j}$. Let $R_{ij}=R_i\cup R_j$ and $B_{ij}=B_i\cup B_j$. Let $T_{ij}$ be the set that will be computed in line 3 of All-Blue-BCP with respect to $R_{ij}$. By the result of \cite{Kirkpatrick1979} we can compute $\dt{B_{ij}}$ in $O(|B_i|+|B_j|)$ time by merging $\dt{B_i}$ and $\dt{B_j}$. We describe in more detail how to compute $T_{ij}$ and also $\dt{T_{ij}}$. Let $\ovl{T_{ij}}=(T_i\cup T_j)\cap R_{ij}$. We compute $\ovl{T_{ij}}$ as follows: take any point $p\in R_i$, if, in $\dt{R}$, $p$ is adjacent to a point $q$ in $R_j$, then add both $p$ and $q$ to $\ovl{T_{ij}}$. Then, $T_{ij}=(T_i\cup T_j)\setminus \ovl{T_{ij}}$.
By the result of \cite{Kirkpatrick1979} we can compute $\dt{T_i\cup T_j}$ in $O(|T_i|+|T_j|)$ time by merging $\dt{T_i}$ and $\dt{T_j}$. Then, by the result of \cite{Loffler2012} we can compute $\dt{T_{ij}}$ (and also $\dt{\ovl{T_{ij}}}$) in $O(|T_i\cup T_j|)$ time by splitting $\dt{T_i\cup T_j}$. 

We analyze the total running time of step 2 of algorithm MinBST as follows. Let $B_1,\dots, B_k$ be the sets of blue points of components obtained in step 1 of algorithm MinBST.
Imagine a binary tree $\mathcal{T}$ that is obtained as follows. $\mathcal{T}$ has $k$ leaves that are labeled $B_1,\dots, B_k$. Recall that in algorithm All-Blue-BCP, we merge the Delaunay triangulations in a binary manner. Each time we merge two Delaunay triangulations $\dt{B_i}$ and $\dt{B_j}$, we create a node with label $B_i\cup B_j$ in $\mathcal{T}$ and connect the nodes $B_i$ and $B_j$ as its children. The label of the root of $\mathcal{T}$ is $B$. 

\begin{lemma}
\label{height-lemma}
The height of $\mathcal{T}$ is $O(\log n)$.
\end{lemma}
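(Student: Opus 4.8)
The plan is to combine two facts: Bor\r{u}vka's algorithm runs for $O(\log n)$ stages, and within each stage the ``merge the two smallest sets'' rule builds a Huffman-style tree on the blue sets, whose leaf depths are controlled by a Fibonacci (golden-ratio) growth argument. The naive combination of these two facts---$O(\log n)$ stages, each contributing a within-stage merge tree of height $O(\log n)$---only yields a bound of $O(\log^2 n)$, so the heart of the proof is to show that the per-stage contributions \emph{telescope}.

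First I would fix an arbitrary leaf of $\mathcal{T}$ and follow the path $v_0,v_1,\dots,v_d$ from that leaf up to the root, writing $w_i$ for the number of blue points in the set labelling $v_i$; note that $w_0\ge 1$ (every component produced by step~1 contains a blue point), that $w_d=|B|$, and that $w_0\le w_1\le\dots\le w_d$. I would cut this path at the stage boundaries, so that it is partitioned into $O(\log n)$ consecutive segments, one per Bor\r{u}vka stage, where the segment for stage $s$ lies entirely inside the two-smallest (Huffman) merge process of a single super-component of that stage. If that segment enters with blue-size $a_s$ and leaves with blue-size $A_s$, then the exit of one stage is the entrance of the next, i.e. $A_s=a_{s+1}$, with $a_1=w_0$ and the last $A_s$ equal to $|B|$.

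The key step is to prove that within a single stage the blue-size grows geometrically relative to the entering size, namely $h_s\le O(1)+\log_\phi(A_s/a_s)$, where $h_s$ is the length of the stage-$s$ segment and $\phi$ is the golden ratio. To get this I would use two properties that hold inside one super-component's Huffman process: each internal node is the union of the two currently smallest available sets, and the sequence of merged sizes is non-decreasing. From these I would establish the Fibonacci recurrence $w_{i+1}\ge w_i+w_{i-1}$ for all but the first step of the segment, the crucial sub-claim being that the set merged onto $v_i$ to create $v_{i+1}$ has blue-size at least $w_{i-1}$. This follows by a short case analysis: that set either already existed at the moment $v_i$ was formed---in which case it is no smaller than either of the two smallest sets that were merged to create $v_i$, hence at least $w_{i-1}$---or it was created afterwards, in which case it is at least $w_i\ge w_{i-1}$ by the monotonicity of merged sizes within the stage. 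The recurrence forces $A_s\ge F_{h_s}\,a_s\ge \phi^{\,h_s-O(1)}a_s$, which rearranges to the claimed bound on $h_s$.

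Finally I would sum over stages. Since $A_s=a_{s+1}$, the product $\prod_s(A_s/a_s)$ telescopes to $|B|/w_0\le n$, so $d=\sum_s h_s\le O(\text{number of stages})+\sum_s\log_\phi(A_s/a_s)=O(\log n)+\log_\phi|B|=O(\log n)$. As this holds for every leaf, the height of $\mathcal{T}$ is $O(\log n)$. The main obstacle, and the place where care is needed, is exactly the stage-boundary behaviour: the global merge order is \emph{not} a single Huffman process, so the Fibonacci recurrence can genuinely fail at the first step of each stage; the telescoping of the entering and exiting sizes is precisely what absorbs these $O(\log n)$ ``free'' steps without inflating the bound to $O(\log^2 n)$.
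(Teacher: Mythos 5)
Your proof is correct and follows essentially the same route as the paper's: both cut the leaf-to-root path at the Bor\r{u}vka stage boundaries, use the ``merge the two smallest'' rule to force multiplicative growth of the set sizes within each stage, telescope these growth factors against the total size $n$, and let the $O(\log n)$ stage count absorb the boundary edges where the growth argument fails. The only real difference is the within-stage estimate: the paper notes that every two consecutive within-stage edges at least double the size (if a node $a$ is merged with $b$ and then $c$, the greedy rule gives $a\leqslant c$ and $b\leqslant c$, so $a+b+c\geqslant 2a$), whereas you derive the slightly stronger Fibonacci recurrence $w_{i+1}\geqslant w_i+w_{i-1}$ via monotonicity of merge results; both yield the same $O(\log n)$ bound.
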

\begin{proof}
Take any leaf $B_i$ of $\mathcal{T}$. Let $\mathcal{P}$ be the path from $\mathrm{root}(\mathcal{T})$ to 
\let\qed\relax\end{proof}
\begin{wrapfigure}{r}{1.8in} 
\vspace{-25pt} 
\includegraphics[width=1.8in]{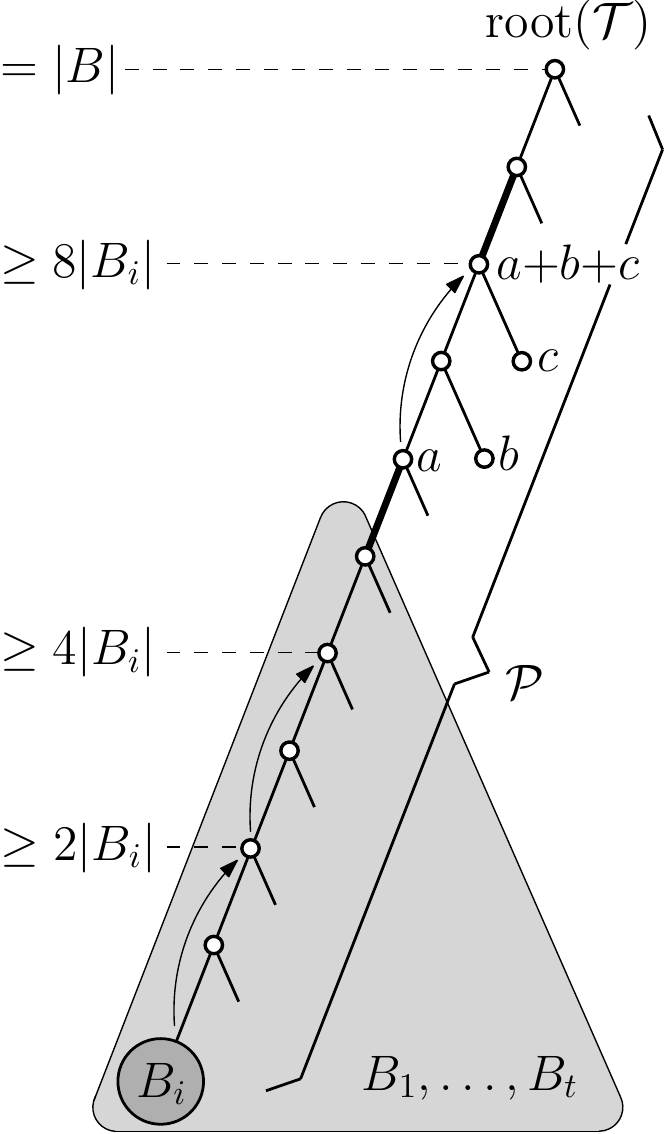} 
\vspace{-15pt} 
\end{wrapfigure}
\vspace{-8pt}
\noindent
 $B_i$. Partition the edges of $\mathcal{P}$ into $\{\mathcal{P}_1,\mathcal{P}_2\}$, where $\mathcal{P}_1$ is the set of edges connecting a node in one stage of Bor\r{u}vka's algorithm to a node in the next stage, and $\mathcal{P}_2$ is the set of edges connecting two nodes within one stage of Bor\r{u}vka's algorithm. See the figure to the right; the fat edges belong to $\mathcal{P}_2$, while the other edges belong to $\mathcal{P}_1$. Since the number of stages in Bor\r{u}vka's algorithm is $\log n$, the number of edges in $\mathcal{P}_1$ is $\log n$. Imagine walking up $\mathcal{P}$, starting from $B_i$ and ending at $\mathrm{root}(\mathcal{T})$. Each time we go up $\mathcal{P}$ and pass two consecutive edges of $\mathcal{P}_2$, the size of the label of that node gets doubled. To see why this is true, consider two consecutive merges of a node $a$ with nodes $b$ and $c$, respectively; see the figure to the right. Here, we use the label of a node to refer to the size of that node as well. Recall that within each stage we iteratively merge the two smallest components, thus, the reason that $a$ was merged with $b$ before $c$ is that none of $a$ and $b$ are bigger than $c$, that is $a\leqslant c$ and $b\leqslant c$. Thus, the size of the grandparent of $a$, which is $a+b+c$, is at least $2a$. Since the label of $\mathrm{root}(\mathcal{T})$ has size at most $n$, the doubling process won't repeat more than $\log n$ times. Thus, the number of edges in $\mathcal{P}_2$ is at most $2\log n+\log n$ (the extra $\log n$ is for the cases when the parity of the number of edges within some stages are odd). Therefore $\mathcal{P}$ has at most $4\log n$ edges. This implies that the height of $\mathcal{T}$ is $O(\log n)$. 
\qed
\vspace{10pt}

By Lemma~\ref{height-lemma}, $\mathcal{T}$ has $O(\log n)$ levels. The labels of the nodes of each level is a partition of a subset of $B$. Accordingly, we have a partition of a subset of $R$ with respect to those nodes. Thus, the total number of blue points (resp. red points) in each level is at most $|B|$ (resp. at most $|R|$). Therefore, the time we spend per level of $\mathcal{T}$ to compute all $\dt{B_i}$ from the Delaunay triangulations of their children is $O(|B|)$. Lemma~\ref{size-T-lemma} can easily be generalized to any partition of any subset of $R$. Based on this, the total size of sets $T_i$ in each level of $\mathcal{T}$ is $O(|R|)$. Therefore, the time we spend per level to compute all $\dt{T_i}$ from the Delaunay triangulations of their children is $O(|R|)$. It turns out that for each level of $\mathcal{T}$ we spend $O(n)$ time. Therefore, the total running time of algorithm MinBST is $O(n\log n)$.

\begin{theorem}
 \label{MinBST-thr}
Given two disjoint sets $R$ and $B$ of points in the plane, a Euclidean minimum spanning tree in $K(R,B)$ can be computed in $\Theta(n\log n)$ time, where $n=|R\cup B|$. 
\end{theorem}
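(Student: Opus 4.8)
The plan is to assemble the theorem from the pieces developed in this section, handling the upper bound and the matching lower bound separately, and within the upper bound treating correctness and running time in turn. Correctness of algorithm MinBST rests on two facts already in hand. First, Observation~\ref{closest-point-obs} guarantees that every edge inserted in step~1 (connecting each point to a closest point of opposite color) appears in some MinBST, so these edges form a legitimate initial forest for a Bor\r{u}vka run. Second, the cut property underlying Bor\r{u}vka's algorithm carries over verbatim to the bichromatic setting once ``shortest edge leaving a component'' is read as ``shortest \emph{bichromatic} edge leaving a component.'' The latter is exactly what All-Blue-BCP, together with its red-blue symmetric counterpart, computes; its correctness was established above by the Voronoi-insertion argument showing $\BCP{B_i}{\ovl{R_i}}=\BCP{B_i}{T_i}$. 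Thus each stage adds, for every current component, a genuinely shortest bichromatic edge to an outside point, and the final single component is a MinBST.

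For the running-time upper bound I would account for the two steps. Step~1 costs $O(n\log n)$ by locating the points of $R$ in $\vor{B}$ and the points of $B$ in $\vor{R}$. The heart of the matter is step~2, where the goal is to amortize the work of All-Blue-BCP over all $O(\log n)$ Bor\r{u}vka stages. I would maintain $\dt{R}$ once at the start, together with $\dt{B_i}$ and $\dt{T_i}$ for every current component. Given these, a single stage costs only $O(n)$: by Lemma~\ref{size-T-lemma} the sets $T_i$ have total size $O(|R|)$ and are computable in $O(|R|)$ time from $\dt{R}$, and by the merge result~\cite{Kirkpatrick1979} all $\dt{B_i\cup T_i}$ and the resulting bichromatic closest pairs are found in $O(n)$ time. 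The triangulations for the next stage are obtained from those of the current stage by iteratively merging the two smallest pieces, using merge~\cite{Kirkpatrick1979} and split~\cite{Loffler2012}, a process recorded by the binary merge tree $\mathcal{T}$. Then I would invoke Lemma~\ref{height-lemma}: $\mathcal{T}$ has height $O(\log n)$, and its nodes at each level label a partition of a subset of $B$ together with the corresponding subset of $R$, so each level carries at most $|B|$ blue points and, by the generalization of Lemma~\ref{size-T-lemma}, $T$-sets of total size $O(|R|)$. Hence each level is processed in $O(n)$ time and step~2 in $O(n\log n)$.

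For the lower bound I would reduce the bichromatic closest pair problem to MinBST. Let $(r^*,b^*)$ be a globally closest red-blue pair. Since $b^*$ is then a closest blue point to $r^*$, Observation~\ref{closest-point-obs} forces an edge of length $|r^*b^*|$ incident to $r^*$ into every MinBST, and being the shortest bichromatic distance overall this is also the length of the shortest edge of the tree. Therefore reading off the shortest edge of any MinBST recovers the bichromatic closest pair distance, a problem with an $\Omega(n\log n)$ lower bound in the algebraic computation-tree model~\cite{Avis1980}. This lower bound transfers to MinBST and matches the upper bound, yielding $\Theta(n\log n)$.

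The main obstacle is the running-time bound for step~2: a naive per-stage recomputation of the Delaunay triangulations would cost $O(n\log^2 n)$, so the real content is showing that $\dt{B_i}$ and $\dt{T_i}$ can be \emph{maintained} across stages rather than rebuilt. This is precisely where the merge/split machinery of~\cite{Kirkpatrick1979, Loffler2012} and the height bound of Lemma~\ref{height-lemma} do the work, and the delicate point is the per-level accounting that keeps the total cost at $O(n)$ per level even though components are repeatedly combined across stages.
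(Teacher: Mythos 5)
Your proposal is correct and follows essentially the same route as the paper: correctness from Observation~\ref{closest-point-obs} plus Bor\r{u}vka and the All-Blue-BCP argument, the $O(n\log n)$ bound via Lemma~\ref{size-T-lemma}, the merge/split results of~\cite{Kirkpatrick1979,Loffler2012}, and the $O(\log n)$-height merge tree of Lemma~\ref{height-lemma}, with the lower bound inherited from bichromatic closest pair~\cite{Avis1980}. Your only addition is to spell out the reduction (reading the shortest edge of a MinBST recovers the bichromatic closest pair), which the paper leaves implicit but which is a correct and worthwhile detail.
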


\section{The maximum bichromatic spanning tree problem}
\label{maxbst-section}
In this section we consider the MaxBST problem. Given $R$ and $B$, we present an algorithm that computes a MaxBST, in $K(R,B)$, in $O(n\log n)$ time, where $n=|R\cup B|$. Our algorithm is optimal because fining the bichromatic farthest pair has an $\Omega(n\log n)$ lower bound (see \cite[Theorem 4.16]{Preparata1985}). Our MaxBST algorithm is the same as the MinBST algorithm that was presented in Section~\ref{n-1-section}, except, in step 1 we connect each point to a farthest point of opposite color, and in step 2, at each stage of Bor\r{u}vka's algorithm, we add the longest red-blue edges connecting each component to a vertex outside of it. The correctness of this algorithm follows from the correctness of Bor\r{u}vka's algorithm and the fact that in any maximum bichromatic spanning tree, every point is connected to a farthest point of opposite color. We skip the parts of MaxBST that are similar to MinBST, but describe its core parts that lead to the same running time.

Before analyzing the running time we introduce some notation, and present a lemma that plays an important role in the analysis.
For a point set $Q$ in the plane let $\CH{Q}$ denote the list of the vertices of the convex hull of $Q$, ordered along the boundary, and let $\fvd{Q}$ denote the farthest point Voronoi diagram of $Q$. Let $\fvc{q}{Q}$ denote the farthest point Voronoi region/cell of a point $q\in Q$ in $\fvd{Q}$; notice that $q$ has a non-empty Voronoi cell in $\fvd{Q}$ if and only if $q$ is in $\CH{Q}$. For two disjoint point sets $Q_1$ and $Q_2$, where each of the points in $Q_1\cup Q_2$ is colored either red or blue, we define the bichromatic farthest pair $\BFP{Q_1}{Q_2}$ as a farthest red-blue pair between $Q_1$ and $Q_2$.

\begin{lemma}
 \label{MaxBST-lemma}
Let $R$ and $B$ be two disjoint sets of points in the plane, and let $T$ be a maximum spanning tree of $K(R,B)$. Then for every edge $(r,b)\in T$ with $r\in R$ and $b\in B$, $r$ is in the convex hull of $R$ or $b$ is in the convex hull of $B$.
\end{lemma}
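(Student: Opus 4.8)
The plan is to use the standard cut/exchange characterization of maximum spanning trees and then force a contradiction by producing an even longer admissible edge. First I would fix an edge $(r,b)\in T$ and delete it, splitting $T$ into the component $A$ containing $r$ and the component $D$ containing $b$. Writing $R_A,R_D$ for the red points and $B_A,B_D$ for the blue points on the two sides, the edges of $K(R,B)$ that cross this cut are exactly the pairs in $(R_A\times B_D)\cup(R_D\times B_A)$. Since $T$ is a maximum spanning tree, $(r,b)$ must be a longest such crossing edge: otherwise, deleting $(r,b)$ and inserting a longer crossing bichromatic pair (which necessarily reconnects $A$ and $D$ into a spanning tree of $K(R,B)$) would yield a heavier bichromatic spanning tree. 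Set $\ell=|rb|$. In particular $(r,b)$ is longest in $R_A\times B_D$, so $r$ is the point of $R_A$ farthest from $b$, and $b$ is the point of $B_D$ farthest from $r$.

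Now suppose, for contradiction, that neither $r\in\CH{R}$ nor $b\in\CH{B}$, i.e.\ neither is a vertex of its own convex hull. Because $r$ is the farthest point of $R_A$ from $b$, the whole set $R_A$ lies in the closed disk of radius $\ell$ centred at $b$, with $r$ on its boundary circle; hence $R_A$ lies in the closed half-plane bounded by the line $\tau$ through $r$ perpendicular to $rb$ and containing $b$. Since $r$ is not a vertex of $\CH{R}$ we have $r\in\CH{R\setminus\{r\}}$, so $r$ is a convex combination of other red points; projecting that combination onto the outward normal of $\tau$ shows that some red point must lie on the far side of $\tau$ from $b$. As all of $R_A$ sits on the near side, that point must belong to $R_D$; call it $\rho$. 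The symmetric argument applied to $b$ (farthest point of $B_D$ from $r$, and not a vertex of $\CH{B}$) produces a blue point $\beta\in B_A$ strictly beyond the line $\sigma$ through $b$ perpendicular to $rb$, on the far side from $r$.

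Finally I would collect the contradiction. Place $r=(0,0)$ and $b=(\ell,0)$; then $\tau$ and $\sigma$ are the parallel vertical lines $x_1=0$ and $x_1=\ell$, and the construction gives $x_1(\rho)<0<\ell<x_1(\beta)$. Thus the horizontal separation of $\rho$ and $\beta$ alone exceeds $\ell$, so $|\rho\beta|>\ell=|rb|$. But $\rho\in R_D$ and $\beta\in B_A$, so $(\rho,\beta)$ is a bichromatic edge crossing the cut, contradicting that $(r,b)$ is a longest crossing edge. This contradiction proves the lemma.

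I expect the main obstacle to be the middle step: arguing rigorously that non-vertexness of $r$ forces a red point \emph{strictly} on the far side of the tangent line $\tau$, and likewise for $\beta$ and $\sigma$. The clean way is the characterization that a point is a hull vertex if and only if it is not a convex combination of the remaining points, combined with the farthest-point property that confines $R_A$ to one side of $\tau$. Degenerate configurations—where the contributing red points lie exactly on $\tau$—would need separate care; I would dispose of them by assuming the customary general position (no three collinear points, distinct interpoint distances), under which the strict inequalities $x_1(\rho)<0$ and $x_1(\beta)>\ell$ hold and the exchange step is unambiguous.
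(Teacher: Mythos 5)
Your proof is correct and reaches the contradiction by a recognizably different mechanism than the paper, so it is worth comparing the two. The paper also argues by contradiction and also uses the two lines perpendicular to $rb$ at $r$ and $b$ (your $\tau$ and $\sigma$): non-vertexness of $r$ yields a red point $r'$ strictly beyond the line at $r$, non-vertexness of $b$ a blue point $b'$ strictly beyond the line at $b$, and then---without any cut property or confinement step---it simply case-analyzes where $r'$ and $b'$ fall after deleting $(r,b)$: if they lie in different subtrees it reconnects with $(r',b')$, and if they lie in the same subtree it reconnects with $(r',b)$ or $(r,b')$, each of these being strictly longer than $(r,b)$. You instead first establish that $(r,b)$ is a longest edge crossing the cut, use that maximality to confine $R_A$ to the disk of radius $\ell$ about $b$ (hence to the closed half-plane bounded by $\tau$) and $B_D$ to the disk about $r$, and thereby force the hull witnesses into the opposite components, $\rho\in R_D$ and $\beta\in B_A$, so a single exchange edge $(\rho,\beta)$ finishes the argument. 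The paper's version buys brevity: by allowing the exchange edge to reuse $r$ or $b$ as an endpoint, the component membership of the witnesses becomes irrelevant and no disk argument is needed. Your version buys a case-free exchange and slightly stronger structural information (the witnesses provably lie across the cut). On the degeneracy you flag: this is not a defect relative to the paper, since the paper's claim that $r\notin\CH{R}$ produces a point strictly left of the vertical line through $r$ fails in exactly the same collinear configurations; moreover, under the general position you invoke (no three collinear points), a non-vertex of the hull must lie in the interior of a triangle of same-colored points, which forces the strict inequality you need, so your disclaimer genuinely closes the case rather than just deferring it.
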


\begin{proof}
The proof is by contradiction; take an edge $(r,b)\in T$ with  $r\in R$ and $b\in B$, where $r\notin\CH{R}$ and $b\notin\CH{B}$. Without loss of generality assume $(r,b)$ is horizontal, and $r$ is to the left of $b$. 
\let\qed\relax\end{proof}
\vspace{-10pt}
\begin{wrapfigure}{r}{1.6in} 
\vspace{-8pt} 
\includegraphics[width=1.5in]{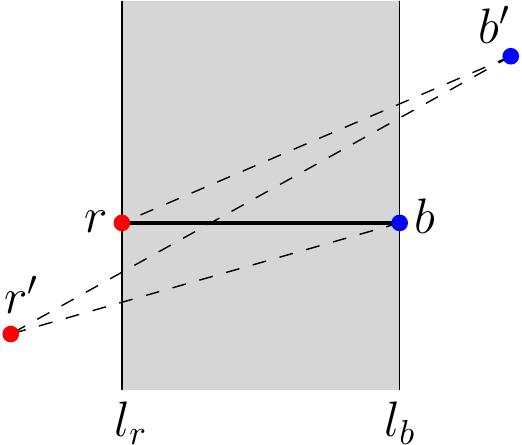} 
\vspace{-5pt} 
\end{wrapfigure}
Consider two vertical lines $l_r$ and $l_b$ that pass through $r$ and $b$, respectively. Since $r\notin \CH{R}$, there is a point $r'$ that belong to $R$ and is to the left of $l_r$. Similarly, there is a point $b'$ that belong to $B$ and is to the right of $l_b$. Observe that $|rb|$ is smaller than each of $|rb'|$, $|r'b|$, and $|r'b'|$. Let $T_r$ and $T_b$ be the two trees obtained from $T$ by removing $(r,b)$. We consider two cases: (i) $r'$ and $b'$ are in different trees, or (ii) $r'$ and $b'$ are in a same tree. In case (i) we obtain a tree $T'$ by joining $T_r$ and $T_b$ with $(r',b')$. In case (ii) without loss of generality assume that $r'$ and $b'$ are in $T_r$. Then, we obtain a tree $T'$ by joining $T_r$ and $T_b$ with $(r',b)$. In both cases, $T'$ is a spanning tree of $K(R,B)$ that is longer than $T$; this contradicts the optimality of $T$.
\qed\newline

\begin{observation}
 \label{bfp-obs}
Let $R$ and $B$ be two sets of points in the plane. Then, the longest edge between $R$ and $B$ has an endpoint in $\CH{R}$ and an endpoint in $\CH{B}$. 
\end{observation}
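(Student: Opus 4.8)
The plan is to reduce the statement to the standard fact that, for any fixed query point, the point of a set that is \emph{farthest} from it must be a vertex of that set's convex hull. Let $(r,b)$ denote the longest edge between $R$ and $B$, with $r\in R$ and $b\in B$; this is exactly the bichromatic farthest pair $\BFP{R}{B}$. The first step is the observation that $r$ must be the point of $R$ that is farthest from $b$: if some $r'\in R$ had $|r'b|>|rb|$, then $(r',b)$ would be a strictly longer bichromatic edge, contradicting the choice of $(r,b)$. Symmetrically, $b$ is the point of $B$ farthest from $r$. So it suffices to show that the farthest point of a planar point set from a fixed query point is always a convex-hull vertex; applying this once with query point $b$ and set $R$ gives $r\in\CH{R}$, and once with query point $r$ and set $B$ gives $b\in\CH{B}$.

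To prove the auxiliary claim I would argue by contradiction, in the spirit of the proof of Lemma~\ref{MaxBST-lemma}. Suppose $r$ is the farthest point of $R$ from $b$ but $r\notin\CH{R}$. Since $r$ maximizes the distance to $b$, the whole set $R$ lies in the closed disk $D$ centered at $b$ with radius $|rb|$, and $r$ lies on the bounding circle of $D$. Consider the line $t$ tangent to this circle at $r$, which is perpendicular to the segment $br$. Because $D$ lies entirely on the side of $t$ containing $b$ and meets $t$ only at $r$, the line $t$ is a supporting line of $R$ that touches $R$ in the single point $r$. A point at which a supporting line meets a set in exactly one point is an extreme point of that set, hence a vertex of its convex hull; this contradicts $r\notin\CH{R}$ and establishes the claim.

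The argument is short, and I do not expect a serious obstacle: the only step that needs care is the auxiliary claim, and even there the work is minimal, since a tangent line meets its circle in exactly one point, so $t$ meets the closed disk $D$ only at $r$, whence $t\cap R=\{r\}$ (this also handles the case of ties, where several points of $R$ are farthest from $b$, as each still lies on $\partial D$). Equivalently, one can replace the disk-and-tangent reasoning by the one-line observation that $x\mapsto|xb|$ is a convex function of $x$, so its maximum over the convex hull $\CH{R}$---whose extreme points all belong to $R$---is attained at a vertex of $\CH{R}$. Either route gives both endpoint conditions by symmetry, completing the proof.
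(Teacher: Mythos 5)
Your proof is correct. Note, however, that the paper itself offers \emph{no} proof of Observation~\ref{bfp-obs}: it is stated as a bare observation, treated as folklore, and used directly to justify replacing $B_i$ by $\CH{B_i}$ in algorithm All-Blue-BFP. So there is no argument in the paper to compare against; what you have done is supply the missing justification, and you have done so by the standard route: reduce to the claim that any point of $R$ farthest from a fixed query point is a vertex of $\CH{R}$, then prove that claim with the enclosing disk and its tangent (supporting) line at $r$. That argument is airtight, and your remark about ties matters: the disk-and-tangent reasoning shows that \emph{every} maximizer lies on $\partial D$ and is met by a supporting line in a single point, hence every farthest point is a hull vertex, which is exactly what is needed when the longest edge is not unique. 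One small caveat: the ``equivalent'' one-line alternative (convexity of $x\mapsto|xb|$, maximum over a polytope attained at an extreme point) is slightly weaker than the tangent argument, since a convex function can also attain its maximum at non-extreme points; by itself it only yields that \emph{some} farthest point of $R$ from $b$ is a hull vertex, not that your specific $r$ is one. That weaker statement still suffices for the paper's use of the observation (line 5 of All-Blue-BFP only needs the maximum bichromatic distance to be realized between hull vertices), but as a proof of the observation as literally stated, you should lead with the tangent-line argument, as you do.
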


Step 1 in algorithm MaxBST takes $O(n \log n)$ time; a straightforward solution is to locate all points of $R$ in $\fvd{B}$ and all points of $B$ in $\fvd{R}$. In the rest of this section we show how to run Bor\r{u}vka's algorithm (step 2) in $O(n\log n)$ time.   
Let $C_1,\dots, C_k$ be the components of the current stage of Bor\r{u}vka's algorithm. We have to find for each component $C_i$, the longest edge connecting a point in $C_i$ to an oppositely-colored point outside of $C_i$. In fact, we have to solve the following problem: 

\begin{problem*}
Let $R$ and $B$ be two sets of red and blue points in the plane, respectively. Given a partition of $R\cup B$ into $\{R_1\cup B_1,\dots,R_k\cup B_k\}$ such that each point is in the same set $R_i\cup B_i$ as its farthest point of opposite color, find for each $i\in\{1,\dots, k\}$, the farthest red-blue pair between $B_i$ and $R\setminus R_i$, and the farthest red-blue pair between $R_i$ and $B\setminus B_i$, i.e., $\BFP{B_i}{R\setminus R_i}$ and $\BFP{R_i}{B\setminus B_i}$. 
\end{problem*}

The following algorithm finds $\BFP{B_i}{R\setminus R_i}$ for all $i\in\{1,\dots,k\}$. By swapping the role of red and blue points, one can compute $\BFP{R_i}{B\setminus B_i}$ for all $i$. For the purpose of this algorithm, by Lemma~\ref{MaxBST-lemma}, we assume that $R$ is in convex position; this can be done by setting $R$ to be $\CH{R}$. This assumption does not violate the fact that each blue point is in the same component as its farthest red point, because the farthest red point to any blue point is in $\CH{R}$. By Observation~\ref{bfp-obs}, the longest edge between $B_i$ and $R\setminus R_i$ has an endpoint in $\CH{B_i}$. Thus, to compute $\BFP{B_i}{R\setminus R_i}$, we maintain $\CH{B_i}$ and compute $\BFP{\CH{B_i}}{R\setminus R_i}$ instead.

\begin{algorithm}[H]                 % enter the algorithm environment
\caption{All-Blue-BFP$(\{R_1\cup B_1, \dots,R_k\cup B_k\})$}          % give the algorithm a caption
\label{alg3} 
\begin{algorithmic}[1]
    \State Construct $\fvd{R}$.
    \For {$i= 1$ to $k$}
	\State $T_i=\{p\in R\setminus R_i\colon$ in $\fvd{R}$, the cell of $p$ is adjacent to the cell of a point of $R_i\}$;
	\State construct $\CH{B_i}$ and $\CH{T_i}$;
	\State $\BFP{B_i}{R\setminus R_i}=$ the points defining the maximum distance between $\CH{B_i}$ and $\CH{T_i}$. 
    \EndFor
\end{algorithmic}
\end{algorithm}

First, we prove the correctness of algorithm All-Blue-BFP. To simplify the notation, we write $\ovl{R_i}$ for $R\setminus R_i$.
Since $R$ is in convex position, every point of $R$ has a non-empty Voronoi cell in $\fvd{R}$. The set $T_i$, that is computed in line 3, contains the points of $\ovl{R_i}$ whose cell in $\fvd{R}$ is adjacent to the cell of a point in $R_i$. Then, in line 5, the
algorithm computes $\BFP{B_i}{\ovl{R_i}}$ as the endpoints of the longest red-blue edge between $\CH{B_i}$ and $\CH{T_i}$,
i.e., $\BFP{B_i}{\ovl{R_i}}=\BFP{\CH{B_i}}{\CH{T_i}}$. Notice that $T_i$ is in convex position. Take any $i\in\{1,\dots,k\}$. Let $(b,p)=\BFP{B_i}{\ovl{R_i}}$ where $b\in B_i$ and $p\in \ovl{R_i}$. To prove the correctness of this algorithm, we only need to show that $p$ is in $T_i$. In order to show this, we prove that $\fvc{p}{R}$ is adjacent to $\fvc{q}{R}$ for some point $q\in R_i$; this guarantees that algorithm All-Blue-BFP adds $p$ to $T_i$ in line 3. 

\begin{figure}[htb] 
 \centering
\includegraphics[width=2.6in]{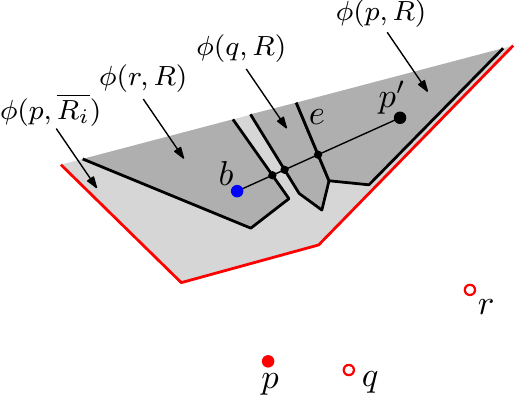}
\caption{$p$ is a point of $\ovl{R_i}$ that is farthest to $b$, and $r$ is a point of $R$ that is farthest to $b$.}
\label{fvc-fig}
\end{figure}
Consider the farthest point Voronoi diagram $\fvd{\ovl{R_i}}$. Because $p$ is a point of $\ovl{R_i}$ that is farthest to $b$, $b$ lies in $\fvc{p}{\ovl{R_i}}$, which is a convex region. See Figure~\ref{fvc-fig}. Imagine the construction of $\fvd{R}$ by inserting the points of $R_i$ to $\fvd{\ovl{R_i}}$. By inserting these points, the Voronoi cells of $\fvd{\ovl{R_i}}$ do not get larger. Thus, $\fvc{p}{R}$ is a subset of $\fvc{p}{\ovl{R_i}}$, and since $R$ is in convex position, $\fvc{p}{R}$ is not empty. Recall that $b$'s farthest red point in $R$, say $r$, is in $R_i$, and hence $b$ lies in $\fvc{r}{R}$. 
Take a point $p'$ in $\fvc{p}{R}$. 
Because of the convexity of $\fvc{p}{\ovl{R_i}}$, $bp'$ is inside this cell, and thus, no edge of $\fvd{\ovl{R_i}}$ crosses $bp'$. 
In $\fvd{R}$, $b$ and $p'$ belong to two different cells. This implies that some edges of $\fvd{R}$ cross $bp'$. Among those edges, consider the edge $e$ whose intersection with $bp'$ is closest to $p'$. Let $q\in R$ be the point such that $e$ is the common edge between $\fvc{q}{R}$ and $\fvc{p}{R}$. Since by inserting the points of $R_i$ into $\fvd{\ovl{R_i}}$, the cells of the points of $\ovl{R_i}$ get smaller, the point $q$\textemdash which has $e$ on its boundary in $\fvc{q}{R}$\textemdash belongs to $R_i$. This finishes the proof for the correctness of All-Blue-BFP. 

By a similar reasoning as in the proof of Lemma~\ref{size-T-lemma} we show that the total size of the sets $T_1,\dots, T_k$ is $O(|R|)$. 
The number of sets $T_i$, that a red point $p$ belongs to is at most $p$'s degree in $\fvd{R}$, that is, the number of its Voronoi neighbors. The sum of the degrees, over all points of $R$ is at most $4|R|-6$ because $\fvd{R}$ has at most $2|R|-3$ edges (or pairs of neighbors). Therefore, the total size of the sets $T_1,\dots,T_k$ is $O(|R|)$. Moreover, having $\fvd{R}$, these sets can be computed in $O(|R|)$ time by checking all edges of $\fvd{R}$.

The running time analysis of algorithm All-Blue-BFP is similar to the one for algorithm All-Blue-BCP that we have seen in Section~\ref{n-1-section}. Instead of maintaining Delaunay triangulations, here we maintain convex hulls and farthest point Voronoi diagrams. First, we review some known results. Let $P$ and $Q$ denote two convex polygons in the plane. Aggarwal~\etal \cite{AAggarwal1989} showed how to compute the farthest point Voronoi diagram of $P$ in $O(|P|)$ time. Edelsbrunner \cite{Edelsbrunner1985} showed that the two vertices that define the maximum distance between $P$ and $Q$ can be computed in $O(|P|+|Q|)$ time; he also proved the lower bound of $\Omega(|P|+|Q|)$ for this problem. 
Assume the vertices of $P\cup Q$ are in convex position. Then, having $\CH{P}$ and $\CH{Q}$, one can {\em merge} them to compute $\CH{P\cup Q}$ in $O(|P|+|Q|)$ time. The reverse operation can also be done in linear time. That is, one can {\em split} $\CH{P\cup Q}$ to obtain $\CH{P}$ and $\CH{Q}$ in time $O(|P|+|Q|)$. 

Now we analyze the running time of algorithm All-Blue-BFP. It takes $O(|R|\log |R|)$ time to construct $\fvd{R}$. In the first stage of Bor\r{u}vka's algorithm we compute $\CH{B_i}$ and $\CH{T_i}$ for all $i\in\{1,\dots,k\}$. This takes $O(n\log n)$ time because the total size of the sets $B_1,\dots,B_k$ is $|B|$, and the total size of the sets $T_1,\dots,T_k$ is $O(|R|)$. Having $\fvd{R}$, $\CH{B_i}$ and $\CH{T_i}$ for all $i\in\{1,\dots,k\}$, we maintain $\CH{B_i}$'s and $\CH{T_i}$'s for the next stage of Bor\r{u}vka's algorithm, using similar merge and split operations as in the MinBST algorithm. This maintenance takes $O(n)$ time per level of the imaginary tree $\mathcal{T}$ which has height $O(\log n)$. Therefore, in total, the algorithm All-Blue-BFP takes $O(n\log n)$ time for all stages of Bor\r{u}vka's algorithm. 

\begin{theorem}
 \label{MaxBST-thr}
Given two disjoint sets $R$ and $B$ of points in the plane, a Euclidean maximum spanning tree in $K(R,B)$ can be computed in $\Theta(n\log n)$ time, where $n=|R\cup B|$.
\end{theorem}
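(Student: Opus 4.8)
The plan is to establish both the upper bound and the matching lower bound. The lower bound $\Omega(n\log n)$ is immediate: the two endpoints of the longest red-blue edge of a MaxBST form a bichromatic farthest pair between $R$ and $B$, so any MaxBST algorithm also solves the bichromatic farthest pair problem, for which an $\Omega(n\log n)$ lower bound in the algebraic computation-tree model is known (see \cite[Theorem 4.16]{Preparata1985}). Hence the substance of the argument is to show that the two-step MaxBST algorithm is correct and runs in $O(n\log n)$ time.

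For correctness I would first argue that the output is a valid bichromatic spanning tree of maximum length. Step~1, which connects each point to a farthest point of the opposite color, never forces a suboptimal choice because every point of any MaxBST is joined to a farthest point of the other color (the maximum-length counterpart of Observation~\ref{closest-point-obs}); indeed this step is just the first stage of Bor\r{u}vka's algorithm. Running Bor\r{u}vka on the resulting components then yields a MaxBST by the standard correctness of Bor\r{u}vka's method for maximum spanning trees, provided each stage correctly identifies, for every component $C_i$, the longest edge from $C_i$ to an oppositely-colored point outside $C_i$. That this reduces to working with convex hulls rather than with all points is exactly what Lemma~\ref{MaxBST-lemma} and Observation~\ref{bfp-obs} provide: the relevant longest edge always has its red endpoint on $\CH{R}$ and its blue endpoint on $\CH{B_i}$, so restricting $R$ to $\CH{R}$ and each $B_i$ to $\CH{B_i}$ loses nothing. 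The correctness of the per-stage routine All-Blue-BFP, together with its red-blue swap, then finishes the correctness argument.

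For the running time I would bound the two steps separately. Step~1 is $O(n\log n)$ by locating every point of one color in the farthest-point Voronoi diagram of the other color. For Step~2 the goal is to execute All-Blue-BFP across all $O(\log n)$ Bor\r{u}vka stages in $O(n\log n)$ total time, mirroring the Delaunay-triangulation maintenance of Section~\ref{n-1-section} but with convex hulls and farthest-point Voronoi diagrams in place of Delaunay triangulations. I would invoke three linear-time primitives: the farthest-point Voronoi diagram of a convex polygon \cite{AAggarwal1989}, the maximum distance between two convex polygons \cite{Edelsbrunner1985}, and the merge and split of convex hulls. Combined with the bound that the total size of $T_1,\dots,T_k$ is $O(|R|)$ in every stage, each stage costs $O(n)$ once the relevant hulls $\CH{B_i}$ and $\CH{T_i}$ are available. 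Maintaining these hulls from one stage to the next is organized by the same imaginary binary tree $\mathcal{T}$ of Lemma~\ref{height-lemma}, which has height $O(\log n)$; since each of its $O(\log n)$ levels induces partitions of subsets of $B$ and of $R$, we spend $O(n)$ time per level and $O(n\log n)$ in total, giving the claimed bound.

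The main obstacle, were I proving this from scratch, is the correctness of All-Blue-BFP---specifically, showing that the endpoint $p$ of $\BFP{B_i}{\ovl{R_i}}$ lies in $T_i$. Establishing this requires the subtle fact that inserting the points of $R_i$ into $\fvd{\ovl{R_i}}$ only shrinks cells, together with careful tracking of which boundary edge of $\fvd{R}$ first separates $b$ from a point of $\fvc{p}{R}$; this is precisely where the geometry of farthest-point diagrams, and the assumption (licensed by Lemma~\ref{MaxBST-lemma}) that $R$ is in convex position, is essential. The remaining running-time pieces are routine adaptations of the MinBST analysis, so once this geometric core and the $O(|R|)$ size bound on the $T_i$ are in hand, the $\Theta(n\log n)$ bound follows by assembling the parts.
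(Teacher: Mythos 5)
Your proposal follows the paper's own proof essentially step for step: the same two-step Bor\r{u}vka algorithm, the same use of Lemma~\ref{MaxBST-lemma} and Observation~\ref{bfp-obs} to justify restricting to convex hulls, the same correctness argument for All-Blue-BFP via shrinking farthest-point Voronoi cells, the same three linear-time primitives (farthest-point Voronoi diagram of a convex polygon, maximum distance between convex polygons, hull merge/split) organized by the height-$O(\log n)$ merge tree of Lemma~\ref{height-lemma}, and the same lower bound via bichromatic farthest pair. The only (harmless) difference is that you spell out the lower-bound reduction explicitly---the longest edge of a MaxBST is a bichromatic farthest pair---where the paper merely asserts optimality with a citation.
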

\section{The Min-$k$-ST and Max-$k$-ST problems}
\label{k-color-section}
In a multicolored version of the Euclidean minimum spanning tree problem, the input points are colored by at least two colors, and we want the colors of the two endpoints of every edge in the tree to be distinct. Formally, we are given a set $P$ of $n$ points in the plane that is partitioned into $\{P_1,\dots,P_k\}$, with $k \geqslant 2$. For each $c\in\{1,\dots,k\}$, assume the points of $P_c$ are colored $c$. In the bichromatic setting, $k$ is $2$. Also, the standard Euclidean minimum spanning tree problem can be interpreted as an instance of this multicolored version in which $k=n$, i.e., each point has a unique color.  
Let $K(P_1,\dots,P_k)$ be the complete multipartite geometric
graph on $P$, which has edges between every point of each set in the partition to all points of the other sets. The Min-$k$-ST problem is the problem of computing a minimum spanning tree in $K(P_1,\dots,P_k)$. We refer to its maximum counterpart as Max-$k$-ST. In this section we show how to solve these problems in $O(n\log n \log k)$ time. We show this for the Min-$k$-ST problem; the solution for the Max-$k$-ST problem is analogous.

The algorithm is as follows. Represent each of the colors $1,\dots, k$ in binary as a sequence of $\log k$ bits. For each $i\in\{1,\dots, \log k\}$, define two {\em canonical point sets}: a point set $\cset{i}$ that contains the points whose color's $i$th bit is $1$, and a point set $\ncset{i}$ that contains the points whose color's $i$th bit is $0$. Note that $\{\cset{i},\ncset{i}\}$ is a partition of $P$. 
In order to compute a Min-$k$-ST in $K(P_1,\dots,P_k)$, we do the following. First we compute MinBST($\cset{i},\ncset{i}$) for all $i\in\{1,\dots, \log k\}$. This gives $\log k$ spanning trees, each of which has $n-1$ edges (these edges are bichromatic in terms of $\cset{i}$ and $\ncset{i}$). Construct a graph $G$ by taking the union of these spanning trees; $G$ has $(n-1)\log k$ edges. Then, compute a minimum spanning tree of $G$ by Prim's algorithm, and output it as a solution. 

The running time analysis of this algorithm is straightforward. Since the construction of MinBST($\cset{i},\ncset{i}$) takes $O(n\log n)$ time for each $i$, the total time to compute all MinBSTs is $O(n\log n\allowbreak \log k)$. The running time of Prim's algorithm on $G$ is $O(n\log n+ n\log k)$. Thus, the total running time of the algorithm is $O(n\log n \log k)$. To prove the correctness of the algorithm, it suffices to show that $G$ contains an optimal Min-$k$-ST of $K(P_1,\dots,P_k)$. We show this in the following lemma. 

\begin{lemma}
 \label{k-color-lemma}
There exists an optimal Min-$k$-ST of $K(P_1,\dots,P_k)$ that is a subgraph of $G$. 
\end{lemma}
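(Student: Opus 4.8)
The plan is to observe first that a Min-$k$-ST is nothing more than an ordinary minimum spanning tree of the weighted graph $K(P_1,\dots,P_k)$, so it suffices to prove that this graph has a minimum spanning tree all of whose edges lie in $G$. Two points receive distinct colors exactly when their color labels differ in some bit $i$, i.e.\ when one lies in $\cset{i}$ and the other in $\ncset{i}$; hence $K(P_1,\dots,P_k)=\bigcup_{i=1}^{\log k}K(\cset{i},\ncset{i})$, and every edge of each MinBST$(\cset{i},\ncset{i})$ joins a point with $i$th bit $1$ to one with $i$th bit $0$ and is therefore a legal multicolored edge, so $G\subseteq K(P_1,\dots,P_k)$. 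Moreover $G$ is connected and spanning, since it contains the spanning tree MinBST$(\cset{1},\ncset{1})$.

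The key structural fact I would isolate is a ``heaviness'' property of the edges outside $G$. Fix an edge $e=(u,v)\notin G$. Its endpoints have different colors, so there is a bit $i$ with $u\in\cset{i}$ and $v\in\ncset{i}$ (or the reverse); write $T_i=\text{MinBST}(\cset{i},\ncset{i})\subseteq G$. Since $T_i$ is a minimum spanning tree of $K(\cset{i},\ncset{i})$ and $e$ is an edge of $K(\cset{i},\ncset{i})$ not in $T_i$, the standard cycle-optimality characterization of minimum spanning trees tells us that $e$ is a maximum-length edge of the fundamental cycle created by adding $e$ to $T_i$. In particular, the $u$--$v$ path $\pi$ inside $T_i$ consists entirely of edges of $G$, each of length at most $|e|$.

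With this in hand I would run an exchange argument. Among all minimum spanning trees of $K(P_1,\dots,P_k)$, choose one, $T^{*}$, minimizing the number of edges not in $G$, and suppose for contradiction that some $e=(u,v)\notin G$ lies in $T^{*}$. Deleting $e$ from $T^{*}$ splits it into parts $A$ and $B$ with $u\in A$ and $v\in B$; the path $\pi$ above joins $u$ to $v$ inside $G$, so it contains an edge $f$ crossing the cut $(A,B)$. Then $f\neq e$ (as $f\in G$ but $e\notin G$), and since $e$ is the only edge of $T^{*}$ crossing $(A,B)$ we have $f\notin T^{*}$; hence $T^{*}-e+f$ is again a spanning tree, with $|f|\le|e|$ because $f$ lies on the fundamental cycle of $e$ in $T_i$. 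Optimality of $T^{*}$ forces $|f|=|e|$, so $T^{*}-e+f$ is another minimum spanning tree with one fewer edge outside $G$, the desired contradiction; hence $T^{*}\subseteq G$.

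The part needing the most care is the exchange step: I must guarantee simultaneously that the replacement edge $f$ crosses the cut induced by removing $e$ (so the result is still a tree) and that $|f|\le|e|$ (so the length does not increase). Both properties flow from the same source, namely that $f$ is drawn from the $T_i$-path between the endpoints of $e$: that path crosses the cut because its endpoints are separated, and its edges are dominated by $|e|$ via the cycle-optimality of the MinBST. The main obstacle is thus keeping the two cycles straight, ensuring the ``maximum-length edge'' property is read off the fundamental cycle in $T_i+e$ (where it holds) and not the unrelated cycle in $T^{*}+e$. Once that alignment is fixed, the bookkeeping showing that $T^{*}$ gains strictly more edges of $G$ is immediate.
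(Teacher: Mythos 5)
Your proof is correct and follows essentially the same route as the paper: both identify a bit position $d$ where the endpoint colors of a bad edge differ, use the optimality (cycle property) of MinBST$(\cset{d},\ncset{d})$ to bound the edges on the connecting path, and then exchange a path edge crossing the cut induced by deleting the bad edge. The only difference is packaging---you phrase it as a minimal-counterexample argument over all optimal trees, while the paper iteratively replaces the bad edges one at a time---which does not change the substance.
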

\begin{proof}
 The proof is constructive.  
\let\qed\relax\end{proof}
\vspace{-9pt}
\begin{wrapfigure}{r}{2.2in} 
\vspace{-8pt} 
\includegraphics[width=2.2in]{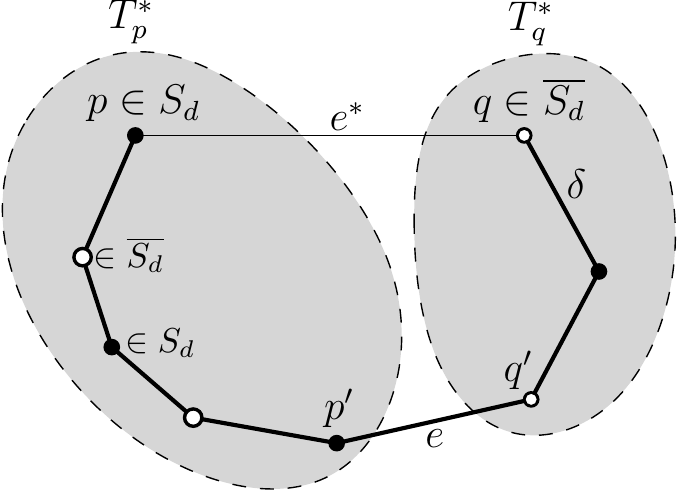} 
\vspace{-15pt} 
\end{wrapfigure}
Take an optimal Min-$k$-ST of $K(P_1,\dots,P_k)$, say $T^*$. Let $E^*$ be the set of edges of $T^*$ that are not in $G$. If $E^*$ is empty, then $T^*$ is a subgraph of $G$ and we are done. Otherwise, we iteratively replace the edges of $T^*$, that are in $E^*$, by some edges of $G$, without increasing the weight of $T^*$. Take any edge $e^*=(p,q)$ in $E^*$.
Assume $p$ is colored $i$ and $q$ is colored $j$. Since $i\neq j$, the binary representations of $i$ and $j$ differ in at least one bit. Let $d$ be such a bit position. Without loss of generality assume that $i$'s $d$th bit is $1$ and $j$'s $d$th bit is $0$. This implies that $p\in \cset{d}$ and $q\in\ncset{d}$. Let $T_d$ be the tree that is obtained by MinBST($\cset{d},\ncset{d}$); note that $T_d$ is a spanning subgraph of $G$. Since $e^*\notin G$, $e^*\notin T_d$. Consider the path $\delta$ between $p$ and $q$ in $T_d$. Since $T_d$ is an optimal bichromatic tree between $\cset{d}$ and $\ncset{d}$, none of the edges of $\delta$ is longer than $e^*$. Let $T^*_p$ and $T^*_q$ be the two trees obtained from $T^*$ by removing $e^*$. There is an edge $e$ in $\delta$ such that $e\notin T^*$ and $e$ connects a point $p'\in T^*_p$ to a point $q'\in T^*_q$. Let $T$ be the tree that is obtained by joining $T^*_p$ and $T^*_q$ with $e$. The tree $T$ is a valid spanning tree of $K(P_1,\dots,P_k)$ whose weight is equal or smaller than the weight of $T^*$. Therefore, $T$ is also an optimal Min-$k$-ST and the number of edges of $T$ that do not belong to $G$ is $|E^*|-1$. Set $T^*=T$, and repeat this process until $E^*=\emptyset$.   
\qed

\begin{theorem}
 \label{kBST-thr}
Given a set $P$ of $n$ points in the plane where $P$ is partitioned into $k\geqslant 2$ sets $P_1,\dots, P_k$, a Euclidean minimum $($respectively maximum$)$ spanning tree in $K(P_1,\dots,P_k)$ can be computed in $O(n\log n\log k)$ time.
\end{theorem}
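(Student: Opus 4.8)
The plan is to assemble the components already established for the minimum case and then transfer them to the maximum case by a single sign change. For the Min-$k$-ST problem I would first invoke the running-time analysis that precedes Lemma~\ref{k-color-lemma}: by Theorem~\ref{MinBST-thr}, computing MinBST($\cset{i},\ncset{i}$) for each of the $\log k$ bit positions costs $O(n\log n)$, so all of them together cost $O(n\log n\log k)$; the graph $G$ formed by their union has $(n-1)\log k$ edges, and running Prim's algorithm on it costs $O(n\log n + n\log k)$, which is within the $O(n\log n\log k)$ budget. Correctness then follows directly from Lemma~\ref{k-color-lemma}: since $G$ contains an optimal Min-$k$-ST, the minimum spanning tree of $G$ returned by Prim's algorithm is itself an optimal Min-$k$-ST of $K(P_1,\dots,P_k)$.

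For the Max-$k$-ST problem I would run the identical pipeline with every MinBST replaced by a MaxBST; Theorem~\ref{MaxBST-thr} gives the same $O(n\log n)$ cost per bit position, so the time bound is unchanged. The only genuinely new ingredient is an analog of Lemma~\ref{k-color-lemma} asserting that the union $G$ of the $\log k$ MaxBSTs contains an optimal Max-$k$-ST, and I would prove it by repeating the exchange argument of Lemma~\ref{k-color-lemma} verbatim with each inequality reversed. Given an edge $e^*=(p,q)$ of an optimal Max-$k$-ST $T^*$ that is not in $G$, pick a bit position $d$ where the colors of $p$ and $q$ differ, so that $p$ and $q$ lie on opposite sides of the bipartition $(\cset{d},\ncset{d})$ and $e^*$ is a valid bichromatic edge of $K(\cset{d},\ncset{d})$. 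In the maximum bichromatic tree $T_d$, the path $\delta$ between $p$ and $q$ has the property that no edge of $\delta$ is \emph{shorter} than $e^*$---this is the maximum-spanning-tree cycle property applied to the cycle formed by $\delta$ together with $e^*$. Removing $e^*$ splits $T^*$ into $T^*_p$ and $T^*_q$, and some edge $e$ of $\delta$ crosses this cut; it is at least as long as $e^*$, its endpoints differ in the $d$th color bit and hence have distinct colors, and swapping $e^*$ for $e$ yields an optimal Max-$k$-ST with one fewer edge outside $G$. Iterating drives the number of such edges to zero, so $G$ contains an optimal Max-$k$-ST.

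The main obstacle is simply ensuring the exchange step is legitimate, which is the same difficulty faced in Lemma~\ref{k-color-lemma}: one must check that the crossing edge $e$ exists on $\delta$, that $e$ connects two points of distinct colors in $K(P_1,\dots,P_k)$---which holds because its endpoints already differ in the $d$th color bit---and that the swap preserves optimality rather than only feasibility. Everything else---the per-bit spanning tree computations and the final Prim step---has already been carried out in the text, so the proof of the theorem reduces to citing Theorems~\ref{MinBST-thr} and~\ref{MaxBST-thr}, Lemma~\ref{k-color-lemma}, and its reversed-inequality counterpart.
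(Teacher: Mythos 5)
Your proposal is correct and follows essentially the same route as the paper: compute the $\log k$ bichromatic spanning trees on the canonical bit-partitions $(\cset{i},\ncset{i})$, take their union $G$, run Prim's algorithm, and justify correctness via the exchange argument of Lemma~\ref{k-color-lemma}. Your explicit reversed-inequality version of that lemma for the maximum case (using the cycle property of maximum spanning trees) is exactly the argument the paper leaves implicit when it says the Max-$k$-ST solution is ``analogous,'' and it is carried out correctly.
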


\section{Open Problem}
We presented $O(n\log n\log k)$-time algorithms for the Min-$k$-ST and Max-$k$-ST problems. Presenting faster algorithms for these problems or providing a matching lower bound is open. Notice that for $k\in\{2,n\}$, these problems can be solved in $O(n\log n)$ time.

\bibliographystyle{abbrv}
\bibliography{Bichromatic-Trees.bib}
\end{document}